
\documentclass{birkjour}
\usepackage{latexsym}
\usepackage{amsmath}
\usepackage{amsfonts}
\usepackage{amssymb}
\usepackage{amsthm}
\usepackage[noadjust]{cite}
\mathsurround 0.25pt

%
%
 \newtheorem{thm}{Theorem}[section]

 \newtheorem{prop}[thm]{Proposition}
 \theoremstyle{definition}
 
 \theoremstyle{remark}

 \numberwithin{equation}{section}

\newcommand{\ed}{\end{document}}
\newcommand{\diag}{\mathrm{diag}}
\renewcommand{\sec}{\mathrm{sec}}
\newcommand{\End}{\mathrm{End}}
\newcommand{\Mat}{\mathrm{Mat}}
\newcommand{\id}{\mathrm{id}}


\newcommand{\LC}[3]{#1\,\rfloor_{#2}\,#3}  

\newcommand{\cl}{C \kern -0.1em \ell}
\newcommand{\hko}{\hookrightarrow}

\newcommand{\n}{\nonumber\\}

\newcommand{\bu}{\bullet}

\newcommand{\cv}{\circ}

\newcommand{\da}{\dagger}

\newcommand{\bec}{\begin{center}}
\newcommand{\eec}{\end{center}}

\newcommand{\bea}{\begin{array}}
\newcommand{\ear}{\end{array}}

\newcommand{\bfr}{\begin{flushright}}

\newcommand{\efr}{\end{flushright}}

\newcommand{\RR}{\mathbb{R}}\newcommand{\op}{\oplus}
\newcommand{\HH}{\mathbb{H}}

\newcommand{\bege}{\begin{equation}}
\newcommand{\enge}{\end{equation}}

\newcommand{\beq}{\begin{eqnarray}}\newcommand{\benu}{\begin{enumerate}}\newcommand{\enu}{\end{enumerate}}
\newcommand{\eeq}{\end{eqnarray}}

\newcommand{\vv}{{\bf v}}
\newcommand{\ee}{{\bf e}}
\newcommand{\uu}{{\bf u}}

\newcommand{\ww}{{\bf w}}

\newcommand{\ZZ}{\mathbb{Z}}

\newcommand{\x}{{\bf{X}}}

\newcommand{\OO}{\mathbb{O}}

\newcommand{\bx}{\begin{pmatrix}}

\def\pro#1{\!\Buildrel
	\raise 4pt\hbox{\the\scriptscriptfont0 #1}\under\circ\!}

\def\a{\alpha}
\def\b{\beta}
\def\x{\xi}

\def\y{\eta}

\def\d{\delta}
\def\da{\delta_\alpha}
\def\db{\delta_\beta}

\def\is{\!=\!}

\def\pro#1{\!\Buildrel
	\raise 4pt\hbox{\the\scriptscriptfont0 #1}\under\circ\!}




\begin{document}

%
%
%
%
%
%
%
%
%
\title[Non-Associativity in the Clifford Bundle on the 7-Sphere]
 {Non-Associativity in the Clifford Bundle\\ on the Parallelizable Torsion 7-Sphere}
\author{Rold\~ao da Rocha}
\address{%
Centro de Matem\'atica, Computa\c c\~ao e Cogni\c c\~ao\\
Universidade Federal do ABC\\
09210-170 Santo Andr\'e SP\\
Brazil}
\email{roldao.rocha@ufabc.edu.br}
\thanks{R. da Rocha is grateful to CNPq 476580/2010-2 and 304862/2009-6 for financial support, and to  Funda\c c\~ao de Amparo \`a Pesquisa do Estado de S\~ao Paulo (FAPESP) 2011/08710-1.
}
\author{M\'arcio A. Traesel}
\address{Instituto Federal de Educa\c c\~ao, Ci\^encia e Tecnologia de S\~ao Paulo\br
Campus Caraguatatuba\br
11665-310 Caraguatatuba SP\br
Brazil}
\email{marciotraesel@ifsp.edu.br}
\author{Jayme Vaz, Jr.}
\address{Departamento de Matem\'atica Aplicada\br
IMECC Unicamp CP 6065 13083-859\br
Campinas SP\br
Brazil}
\email{vaz@ime.unicamp.br}

\subjclass{15A66, 11E88, 55R25}

\keywords{Parallelizable 7-sphere, Clifford bundles, generalized octonionic algebras}

\date{October 31, 2011}
\dedicatory{To Jaime Keller}

\begin{abstract}
In this paper we discuss generalized properties of non-associa\-ti\-vi\-ty in Clifford bundles on the 7-sphere $S^7$. Novel and prominent properties inherited from the non-associative structure of the Clifford bundle on $S^7$ are demonstrated. They naturally lead to general transformations of the spinor fields on $S^7$ and have dramatic consequences for the associated Ka\v{c}-Moody current algebras. All additional properties concerning the non-associative structure in the Clifford bundle on $S^7$ are considered. We further discuss and explore their applications.
\end{abstract}

\maketitle
\section{Introduction}
The main goal of this paper is provide a general class of non-associative structures on $S^7$ in the Clifford algebra formulation of generalized octonionic products. After briefly revisiting previous results \cite{eu1,qts7}, an octonionic product is defined in the Clifford algebra $\cl_{0,7}$ \cite{loun} which is closely related to the $S^7$ sphere \cite{aht}. Using this formalism, certain new identities are derived in a generalized octonionic algebra. See, for example, \cite{baez} for details as well as \cite{top,hofff,isot} for some prominent applications. Although there is a great variety of new octonionic products that may be defined, we restrict our formalism to a generalization of the results presented in \cite{dix, mart}. First, just to fix our notation, original non-associative deformed products between octonions presented in \cite{eu1,dix,mart} are briefly reviewed. Then, we discuss extended octonionic products between octonions and Clifford multivectors, and also among the Clifford multivectors, in the light of \cite{eu1,dix}. Our results immediately lead to the formalism presented in \cite{dix} in a very special case when a paravector component of an arbitrary multivector in $\cl_{0,7}$ is taken into account.

The results from \cite{eu1,qts7} are generalized along with a definition of non-equivalent non-associative products which are introduced in order to discuss the Clifford bundle on $S^7$ 
\cite{eu1}. The formalism presented in \cite{dix,mart} shows that the octonionic product can be deformed in order to include a parallelizable torsion on $S^7$. The $X$-product, as presented, is exactly twice the torsion components. We prove that, instead of considering the tangent space associated with the octonionic algebra, the whole Clifford bundle 
reveals unexpected properties. Despite of being hidden in the tangent bundle at $S^7$, the new properties are evinced when the Clifford bundle on $S^7$ is considered.

Section 2 briefly revisits  some mathematical tools and techniques related to the octonionic algebra in the Clifford algebra setting. We concentrate on the fundamental properties already 
discussed in \cite{eu1,loun} when introducing octonions via the Clifford algebra. In Section~3, new definitions reveal a wealth of unexpected results and the subtle difference arising in the generalization of the so called $u$-product with $u\in \cl_{0,7}$. Moreover, we review some properties from \cite{eu1} and present a few examples elucidating the main motivation for the formalism. In addition, new classes of non-associative products are introduced in the Clifford bundle on $S^7$ as well as directional non-associative products and new counter-examples to the Moufang identities that do not hold in our extended formalism.
 
\section{Preliminaries}
Let $V$ be an $n$-dimensional real vector space and let $V^*$ denote its dual. We consider the tensor algebra $\bigoplus_{i=0}^\infty T^i(V)$ from which we restrict our attention to the space 
$\bigwedge V  = \bigoplus_{k=0}^n\bigwedge^k V $ of multivectors over $V$. $\bigwedge^k V$ denotes the space of the antisymmetric $k$-tensors which is isomorphic to the vector space of $k$-forms.  Given 
$\psi\in\bigwedge^k V$, $\tilde\psi$ denotes the \emph{reversion}, an algebra anti-automorphism given by $\tilde{\psi} = (-1)^{[k/2]}\psi$ ([$k$] denotes the integer part of $k$). $\hat\psi$ denotes 
the \emph{main automorphism or grade involution}, given by $\hat{\psi} = (-1)^k \psi$. The \emph{conjugation} is defined as the reversion followed by the main automorphism. If $V$ is endowed with a non-degenerate, symmetric, bilinear map $g: V^*\times V^* \rightarrow \RR$, it is possible to extend $g$ to $\bigwedge V$.  The Clifford product between $\ww\in V$ and $\psi\in\bigwedge V$ is given by 
$\ww\psi = \ww\wedge \psi + \LC{\ww}{}{\psi}$. The Grassmann algebra $(\bigwedge V,g)$ endowed with the Clifford  product is denoted by $\cl(V,g)$ or $\cl_{p,q}$, the Clifford algebra associated with $V\simeq \RR^{p,q},\; p + q = n$.

The octonionic algebra via the Clifford algebras is briefly reviewed in \cite{eu1,loun,qts7}. The octonionic algebra $\mathbb{O}$ can be defined as the paravector space $\RR\op\RR^{0,7}$  endowed with the product $\circ$. The identity $\ee_0=1$ and an orthonormal basis $\left\{\ee_a\right\}^{7}_{a=1}$ generate the octonion algebra \cite{baez}. The octonionic product can be constructed using the Clifford algebra $\cl_{0, 7}$ as 
\bege\label{201}
A\circ B=\left\langle AB(1-\psi)\right\rangle_{0\op 1}, \quad A,B \in \RR \op \RR^{0, 7},
\enge
where $\psi=\ee_{126}+\ee_{237}+\ee_{341}+\ee_{452}+\ee_{563}+\ee_{674}+\ee_{715}\in\bigwedge^3 \RR^{0,7} \hookrightarrow \cl_{0,7}$.\footnote{We let $\ee_{ijk}$ denote the Clifford product 
$\ee_i\ee_j\ee_k$ where $\ee_i,\ee_j,\ee_k \in \mathbb{R}^{0,7}\hookrightarrow \cl_{0,7}$. In general, the Clifford product of $k$ vectors $\uu_{i_1}\uu_{i_2}\cdots \uu_{i_k}$ where 
$\uu_{i_s} \in \mathbb{R}^{0,7}$ will be denoted as $\uu_{i_1 \ldots i_k}$ \cite{loun}.} The main reason for introducing the octonionic product through the Clifford product as described is to extend  hereafter our formalism from the Clifford algebras to the Clifford bundles on $S^7$. The octonion multiplication table is constructed by $\ee_a\circ \ee_b=\epsilon^{c}_{ab}\ee_c-\delta_{ab}$ 
$(a,b,c=1,\ldots,7),$ where $\epsilon^{c}_{ab}=1$ for the cyclic permutations $(abc)=(126)$, $(237)$, $(341)$, $(452)$, $(563)$, $(674)$, $(715)$.  The Clifford conjugation of $X=X^0+X^a\ee_a \in \mathbb{O}$ is given by $\bar{X}=X^0-X^a\ee_a$, where $X^0$ and $X^a$ are real coefficients. 

\section{The $\bu$-product and the $\odot$-product on $S^7$} 
Given $X,Y \in \RR \op \RR^{0, 7}$ fixed but arbitrary such that $X\bar{X}=\bar{X}X=1=\bar{Y}Y=Y\bar{Y}$ ($X,Y \in S^7$), the \textit{$X$-product} is defined by \cite{mart, dix} 
\bege\label{205}
A\circ_X B:=(A\circ X)\circ(\bar{X}\circ B).
\enge
The expressions below are shown in, e.g., \cite{dix}
\bege\label{3p14}
(A\circ X)\circ(\bar{X}\circ B)=X\circ ((\bar{X}\circ A)\circ B)=(A\circ(B\circ X))\circ \bar{X}.
\enge
These identities are in general valid for any octonion $X,$ not only for $X \in S^7$. Since from now on we focus on $S^7$ with parallelizable torsion, we restrict the expressions above to 
$X\in S^7$. The structure of the vector space $\mathbb{R}^{0,7} \hookrightarrow \cl_{0,7}$ is not sufficient to determine whether the $\OO$-conjugation is the grade involution or conjugation from inherited from the Clifford algebra $\cl_{0,7}$ through the equations above since $\bar{X}$ (the octonionic conjugation) could either be $\hat{X}$ (the grade involution) or $\bar{X}$ (the Clifford conjugation). Since $\bar{X}$ (octonionic) is an anti-automorphism, it excludes immediately the grade involution.

Because of the non-associativity of the product in $\OO$, in general, $A\cv_X B\neq A\cv B$. Exceptions can be provided when one defines \cite{e8} the following sets:
\begin{align*}
\Xi_{0} = & \{\pm \ee_{a}\}, \\
\Xi_{1} = & \{(\pm\ee_{a}\pm \ee_{b})/\sqrt{2}\,|\, a,b \mbox{ distinct}\}, \\ 
\Xi_{2} = & \{(\pm \ee_{a}\pm \ee_{b}\pm \ee_{c}\pm \ee_{d})/2\,|\, a,b,c,d \mbox{ distinct}, \; \ee_{a}\circ(\ee_{b}\circ(\ee_{c}\circ\ee_{d}))=\pm 1\}, \\ 
\Xi_{3} = & \{(\sum_{a=0}^{7}\pm \ee_{a})/\sqrt{8}\,|\, \mbox{ odd number of plus signs}  \},
\end{align*} 
and when we choose $X$ in one these four sets. Then, for all $a,b,c,d\in\{0,\ldots,7\}$, there is some $c\in\{0,\ldots,7\}$ such that $\ee_{a}\circ_{X}\ee_{b} = \pm \ee_{c}.$ Eqs.~(\ref{3p14}) show that $X$ determines two linear transformations $f, g\in$ End($\OO$) such that
\begin{gather}
A \circ_X B = f(A \circ f^{-1}(B)) = g(g^{-1}(A) \circ B)
\end{gather}
for all $A, B \in \OO$. The alternativity of the $\circ_X$-multiplication then follows as
\begin{gather}
A\circ_X (A\circ_X B)=(A\circ A)\circ_X B,  \qquad (A\circ_X B)\circ_X B=A\circ_X (B\circ B).
\end{gather}
The $\circ_X$-multiplication is essentially the initial $\circ$-multiplication: there exists an orthogonal transformation $h$ in the special orthogonal group SO($\RR^{0,7}$) such that the mapping $\lambda + \vv \mapsto\lambda + h(\vv)$, for all $\lambda\in \RR$ and $\vv \in \RR^{0,7}$, is an isomorphism from $(\RR\oplus\RR^{0,7}, \circ)=\OO$ onto $(\RR\oplus\RR^{0,7}, \circ_X) = \OO_X$. The reciprocal statement might be conjectured: if there is a $\star$-multiplication $\OO \times \OO \rightarrow \OO$ and a transformation $h \in$ SO($\RR^{0,7}$) such that the mapping $\lambda + \vv \mapsto\lambda + h(\vv)$ is an isomorphism from $\OO$ onto $\OO_X$, then there exists $X \in S^7$ such that $A\star B = A\circ_X B$ for all $A, B \in \OO$. The isomorphism $\OO\simeq \OO_X$ was widely 
discussed in \cite{dix}. 

Now, a simple change in the sign of $X$ gives rise to distinct but isomorphic copies of $\OO$. Thus, an orbit containing the isomorphic copies of $\OO$ arising from any given copy, is 
$ S^7/\ZZ_2=\RR \mathrm{P}^7,$ the manifold obtained from the $S^7$-sphere by identifying two diametrically opposite points since $A\cv_{-X}B=A\cv_X B$. Moreover, the composition of $X$-products is yet another $X$-product. That is, for $X,Y \in S^7$,
\begin{align} 
AB\xrightarrow{X} A\cv_X B &= (A\cv X)\cv(\bar{X}\cv B)\xrightarrow{Y}(A\cv_X Y)\cv_X(\bar{Y}\cv_X B) \notag\\ 
                           &= \left[A\cv(Y\cv X)\right]\cv\left[(\overline{Y\cv X})\cv B\right]=A\cv_{Y\cv X}B
\end{align}
using the fact that $(U\cv\bar{X})\cv X=U=\bar{X}\cv(X\cv U)$ for all $U \in \OO$. 

A non-associative product called the \textit{$u$-product} was introduced in \cite{eu1} as a natural generalization for the $X$-product. For homogeneous multivectors 
$u=\uu_{1 \ldots k} = \uu_{1} \cdots \uu_{k}\in\bigwedge^k \RR^{0,7}\hookrightarrow\cl_{0, 7}$, where $\left\{\uu_p\right\}^{k}_{p=1}\subset\RR^{0, 7}$  is an orthogonal set with respect to the metric $g=\diag(-,-,-,-,-,-,-)$, and  $A\in\RR\op\RR^{0, 7}$, the products $\bu_\llcorner$ and $\bu_\lrcorner$ are defined as \cite{eu1}
\begin{align}
\bu_\llcorner \colon &(\RR \op \RR^{0,7})\times \bigwedge^k \RR^{0,7} \to \RR \op \RR^{0, 7},\n 
&(A,u)\mapsto A\bu_\llcorner u=((\cdots((A\circ \uu_1)\circ \uu_2)\circ \cdots)\circ \uu_{k-1})\circ \uu_k, \label{209}\\
\bu_\lrcorner \colon &\bigwedge^k \RR^{0,7}\times (\RR \op \RR^{0,7})\to \RR \op \RR^{0,7},\n
&(u,A)\mapsto u\bu_\lrcorner A=\uu_1\circ (\uu_2\cv(\cdots \circ (\uu_{k-1}\circ (\uu_k \circ A))\cdots )).\label{210}
\end{align}
In addition, one defines $A\bu_\llcorner(\lambda\,1)=\lambda A =(\lambda\,1)\bu_\lrcorner A$ for any $A\in\OO$, a real scalar $\lambda,$ and where $1=\ee_0$ denotes the unity of $\cl_{0,7}.$ Moreover, the products (\ref{209}), (\ref{210}) are extended to the whole $\bigwedge \RR^{0,7}$ by linearity. Given an  element $u \in \bigwedge \RR^{0,7}$, the $u$-product is defined as
\begin{align}
\cv_u \colon (\RR \op \RR^{0,7})\times (\RR \op \RR^{0,7}) &\to \RR \op \RR^{0,7} \n 
(A, B)&\mapsto A\cv_u B:=(A\bu_\llcorner u)\cv (\bar{u}\bu_\lrcorner B).
\label{213}
\end{align}
In \cite{eu1}, the authors ask whether 
\begin{multline}
A\cv_u B:=(A\bu_\llcorner u)\cv (\bar{u}\bu_\lrcorner B)\overset{?}{=}\\(A\cv(B\bu_\llcorner u))\bu_\llcorner \bar{u}\overset{?}{=}u\bu_\lrcorner((\bar{u}\bu_\lrcorner A)\cv B)
\end{multline}
holds in a context where any similar generalization related to~\eqref{205} can be constructed in the non-associative formalism induced by the $u$-product for 
$u\in  \sec \left(\bigwedge T_xS^7\right)$, the exterior bundle on $S^7$. In a straightforward example we showed in \cite{qts7} that, in general, 
\begin{gather}
(A\bu_\llcorner u)\cv (\bar{u}\bu_\lrcorner B)\neq (A\cv(B\bu_\llcorner u))\bu_\llcorner \bar{u}.
\end{gather} 
Whether the equality 
\begin{gather}
(A\circ X)\circ(\bar{X}\circ B)=(A\circ(B\circ X))\circ \bar{X}
\end{gather} 
holds in the more general setting  in the context of the $\bu$-product, is an open question which we intend to address. 

Now, given 
$$
X = X^{0}+X^{1}\ee_{1}+X^{2}\ee_{2}+X^{3}\ee_{3}+X^{4}\ee_{4}+X^{5}\ee_{5}+X^{6}\ee_{6}+X^{7}\ee_{7}\in S^{7},
$$ 
then
\begin{align} 
\ee_{1}\circ_{X}\ee_{2} =& ((X^{0})^{2}+(X^{1})^{2}+(X^{2})^{2}+(X^{6})^{2} \notag \\
                         &\hspace*{0.5in}-(X^{3})^{2}-(X^{4})^{2}-(X^{5})^{2}-(X^{7})^{2})\ee_{6} \notag \\
                         &+2(X^{0}X^{5}+X^{1}X^{7}-X^{2}X^{4}+X^{3}X^{6})\ee_{3} \notag \\ 
                         &+2(-X^{0}X^{7}+X^{1}X^{5}+X^{2}X^{3}+X^{4}X^{6})\ee_{4} \notag \\
                         &+2(-X^{0}X^{3}-X^{1}X^{4}-X^{2}X^{7}+X^{5}X^{6})\ee_{5} \notag \\ 
                         &+2(X^{0}X^{4}-X^{1}X^{3}+X^{2}X^{5}+X^{7}X^{6})\ee_{7}. 
\label{ex1}
\end{align}
The Hopf fibration $S^3\cdots S^7\rightarrow S^4$ can therefore be defined as \cite{dix}: 
\begin{align} 
A^3=&2(X^{0}X^{5}+X^{1}X^{7}-X^{2}X^{4}+X^{3}X^{6}),\notag \\
A^4=&2(-X^{0}X^{7}+X^{1}X^{5}+X^{2}X^{3}+X^{4}X^{6}),\notag \\
A^5=&2(-X^{0}X^{3}-X^{1}X^{4}-X^{2}X^{7}+X^{5}X^{6}),\notag \\
A^6=&((X^{0})^{2}+(X^{1})^{2}+(X^{2})^{2}+(X^{6})^{2}\notag\\
    &\hspace*{0.5in}-(X^{3})^{2}-(X^{4})^{2}-(X^{5})^{2}-(X^{7})^{2}),\notag \\
A^7=&2(X^{0}X^{4}-X^{1}X^{3}+X^{2}X^{5}+X^{7}X^{6}).
\end{align}
The equation~(\ref{ex1}) can be written as
\begin{gather} 
\ee_1\cv_X\ee_2=A^3\ee_3+A^4\ee_4+A^5\ee_5+A^6\ee_6+A^7\ee_7
\end{gather}
where $A \in \OO_X$ and $A \in S^4$. One can observe that the $X$-product is a map from $S^7$ to $S^4$.

$\cl_{0,7}$ is a Clifford algebra whose matrix representation is provided by the direct sum of two ideals -- each isomorphic to $\Mat(8,\RR)$ -- which are generated, respectively, by the central idempotents $\frac12 (1 \pm \ee_{1234567}).$ Let us consider two linear mappings  $\RR^{0,7}\rightarrow \End(\OO)$  which map every vector $\vv\in\RR^{0,7}$ to the linear operator 
$L_\vv :\OO\rightarrow\OO$ and $R_\vv:\OO\rightarrow\OO,$ respectively, such that $L_\vv(A) = \vv\circ A$ and $R_\vv(A) = A\circ \vv$ for all $A\in\OO.$ Given $C\in\OO$, from the identities 
$(A\circ A)\circ C=A\circ (A\circ C)$ and $(C\circ A)\circ A=C\circ (A\circ A),$ $C\in\OO$, it follows that $L_\vv\circ L_\vv = R_\vv \circ R_\vv = \vv^2 \id_\OO$ (here, $\circ$ denotes the composition of mappings). According to the universal property of the Clifford algebra, the mapping $\vv\mapsto L_\vv$ sending $\vv\mapsto R_\vv$ extends to an algebra [anti-]morphism 
$\cl_{0,7}\rightarrow \End(\OO).$ Furthermore, one can verify that 
\begin{gather}
(1+\ee_{1234567})\bullet_\lrcorner A=A\bullet_\llcorner (1+\ee_{1234567})=0, \quad \forall A\in \mathbb{O},
\end{gather} 
and since $\dim(\cl_{0,7}) = 128$ and $\dim(\End(\OO)) = 64,$ the kernel of the morphism and of the antimorphism is the ideal generated by one of the idempotents $\frac12(1 \pm \ee_{1234567}).$ The equality
$$
(((((\ee_1 \circ \ee_2)\circ \ee_3)\circ \ee_4)\circ \ee_5)\circ \ee_6)\circ \ee_7 = \ee_1 \circ (\ee_2 \circ(\ee_3 \circ (\ee_4 \circ(\ee_5 \circ (\ee_6 \circ \ee_7))))) = -1
$$
shows that in both cases the kernel is the ideal generated by the central element $\frac12(1 + \ee_{1234567}).$

The operators $L_u$ and $R_u$ are defined for all $u \in \cl_{0,7}$, and if we set $L_u(A) = u\bullet_\llcorner A$ and $R_u(A) = A \bullet_\lrcorner u$, then for $A,B\in\OO$ and $u,w\in\cl_{0,7}$, one obtains
\begin{gather}
A \bullet_\llcorner B = A \bullet_\lrcorner B = A\circ  B, \notag\\
(uw) \bullet_\llcorner B = u \bullet_\llcorner (w \bullet_\llcorner B), \quad B \bullet_\lrcorner (uw) = (B \bullet_\lrcorner u) \bullet_\lrcorner w,
\end{gather} 
which immediately implies Eqs.~(\ref{209}, \ref{210}).

Now, given $u=\uu_{1 \ldots k}$ and $v=\vv_{1 \ldots k} \in \cl_{0,7}$, the non-associative product between Clifford algebra elements was defined in~\cite{eu1} as
\begin{align}
\odot_\llcorner \colon &\cl_{0,7} \times \cl_{0,7}\to \RR \op \RR^{0,7},\n
&(u, v)\mapsto u\odot_\llcorner v:=\uu_1 \cv (\uu_2 \cv(\cdots \cv(\uu_{k-1}\cv(\uu_k \bu_\llcorner v))\cdots)),\n
\odot_\lrcorner \colon &\cl_{0, 7} \times \cl_{0, 7} \to \RR\op \RR^{0,7},\n
&(u, v)\mapsto u\odot_\lrcorner v:=((\cdots\cv((u\bu_\lrcorner \vv_1)\cv \vv_2)\cv\cdots)\cv \vv_{k-1})\cv \vv_k.
\label{221}
\end{align}
Symbol $\odot$ denotes both products $\odot_\llcorner$ and $\odot_\lrcorner.$ It is easy to see that when elements of $\cl_{0, 7}$ are restricted to the paravector space $\RR \op \RR^{0,7}$, then 
$A\bu B\equiv A\cv B$ and $A \odot B \equiv A \cv B$ where $A,B \in \RR \op \RR^{0,7}.$ It was seen in \cite{qts7} that 
$(\ee_{57}-\ee_{31})\odot_\llcorner\ee_{123}=\ee_7+\ee_2$, while $(\ee_{57}-\ee_{31})\odot_\lrcorner\ee_{123}=-\ee_7+\ee_2.$

\section{Towards a Moufang-like generalization}
Our goal is to obtain the most general expression generalizing~\eqref{3p14} and emulating it when $u\in\cl_{0,7}$ is considered instead of $X\in\RR\oplus\RR^{0,7}.$

In order to generalize~\eqref{205}, some results must be introduced first. 
\begin{prop}[\cite{dix}] 
The following elements from $\bigwedge^0 \RR^{0,7}\op\bigwedge^3 \RR^{0,7}$
\begin{align} 
P_0&=(1+\ee_{476}+\ee_{517}+\ee_{621}+\ee_{732}+\ee_{143}+\ee_{254}+\ee_{365})/8,\n
P_1&=(1-\ee_{476}+\ee_{517}+\ee_{621}-\ee_{732}+\ee_{143}-\ee_{254}-\ee_{365})/8,\n
P_2&=(1-\ee_{476}-\ee_{517}+\ee_{621}+\ee_{732}-\ee_{143}+\ee_{254}-\ee_{365})/8,\n
P_3&=(1-\ee_{476}-\ee_{517}-\ee_{621}+\ee_{732}+\ee_{143}-\ee_{254}+\ee_{365})/8,\n
P_4&=(1+\ee_{476}-\ee_{517}-\ee_{621}-\ee_{732}+\ee_{143}+\ee_{254}-\ee_{365})/8,\n
P_5&=(1-\ee_{476}+\ee_{517}-\ee_{621}-\ee_{732}-\ee_{143}+\ee_{254}+\ee_{365})/8,\n
P_6&=(1+\ee_{476}-\ee_{517}+\ee_{621}-\ee_{732}-\ee_{143}-\ee_{254}+\ee_{365})/8,\n
P_7&=(1+\ee_{476}+\ee_{517}-\ee_{621}+\ee_{732}-\ee_{143}-\ee_{254}-\ee_{365})/8,
\label{222}
\end{align}
are $\bu$-idempotents. 
\end{prop}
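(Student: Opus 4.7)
The first thing to fix is the interpretation. Since each $P_i$ has scalar and trivector parts but no paravector components, the original definition of $\bullet_\llcorner,\bullet_\lrcorner$ in Section~3 does not directly cover the product $P_i\bullet P_i$. The natural reading, consistent with the algebra (anti)morphism $\cl_{0,7}\to\End(\OO)$, $u\mapsto L_u$ with $L_u(B)=u\bullet_\llcorner B$, is that $L_{P_i}$ (and likewise $R_{P_i}$) is an idempotent operator on $\OO$. By the identity $(uw)\bullet_\llcorner B=u\bullet_\llcorner(w\bullet_\llcorner B)$ recorded just before the proposition, this is equivalent to the Clifford-algebra condition
\begin{equation*}
P_i^{2}-P_i\;\in\;\ker\bigl(\cl_{0,7}\to\End(\OO)\bigr),
\end{equation*}
and the excerpt already identifies this kernel as the two-sided ideal generated by $\tfrac12(1+\ee_{1234567})$. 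So the whole proposition reduces to an algebraic identity in $\cl_{0,7}$ modulo this central ideal.

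Writing $P_i=\tfrac18(1+\psi_i)$ with $\psi_i$ the signed sum of seven trivectors appearing in $P_i$, one has $P_i^{2}=\tfrac{1}{64}(1+2\psi_i+\psi_i^{2})$, so the congruence $P_i^{2}\equiv P_i$ collapses to the single identity $\psi_i^{2}\equiv 7+6\psi_i\pmod{\ker}$. A direct computation using $\ee_a^{2}=-1$ and anticommutativity gives $\ee_{abc}^{2}=+1$ for each of the seven trivectors, so the diagonal part of $\psi_i^{2}$ contributes exactly the scalar~$7$. For the $42$ off-diagonal cross terms $\pm\ee_{abc}\ee_{a'b'c'}$, a glance at the cyclic triples $(126),(237),(341),(452),(563),(674),(715)$ shows that any two distinct triples share exactly one index, so each cross product is a single Clifford 4-vector. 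The congruence $\ee_{1234567}\equiv -1\pmod{\ker}$, which holds because the central element $\ee_{1234567}$ is annihilated by the generator of the kernel, then converts each such 4-vector into its complementary 3-vector. The remaining task is the combinatorial check that these $42$ dualised cross terms reassemble precisely into $6\psi_i$.

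The main obstacle is exactly this sign bookkeeping: one must verify that only trivectors from the defining list of $\psi_i$ (with the correct signs dictated by the choice of $i$) actually appear in the off-diagonal sum. A cleaner parallel route I would run in tandem is to work directly in the matrix representation. Since $\cl_{0,7}/\ker\cong\Mat(8,\RR)\cong\End(\OO)$, the eight elements $P_0,\ldots,P_7$ are expected to form a complete system of pairwise orthogonal rank-one projectors on $\OO$, with $L_{P_i}$ the projector onto $\RR\ee_i$. This can be verified for $i=0$ by a finite computation, namely evaluating $L_{P_0}(\ee_a)$ for $a=0,\ldots,7$ using only the octonionic multiplication table given in Section~2; the same check then handles the remaining $P_i$ either one by one or collectively via the $\ZZ_2^{3}$-symmetry that permutes the sign patterns in the definitions of $P_0,\ldots,P_7$ compatibly with the action on $\OO$. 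Either route settles the proposition, and the matrix-representation route has the advantage of simultaneously yielding orthogonality $P_i\bullet P_j\equiv 0$ for $i\neq j$ and completeness $\sum_i P_i\equiv 1$ as essentially free by-products.
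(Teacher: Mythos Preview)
Your reading of the statement and both routes you sketch are correct, and they coincide with what the paper actually does: it declares the identities \eqref{bu2}--\eqref{bu3} ``straightforward to verify'' on the basis $\{\ee_a\}$ (your matrix-representation route), and separately remarks that the $P_a$ are mutually annihilating idempotents modulo the ideal $(1+\ee_{1234567})\cl_{0,7}$, checked via ``twenty one formulas'' of the type $\ee_{476}\ee_{517}-\ee_{732}=-(1+\ee_{1234567})\ee_{732}$ --- exactly your off-diagonal cross-term computation for $\psi_i^2\equiv 7+6\psi_i$. The one shortcut the paper records that you do not exploit is the closed form $P_a=\ee_a(1-\psi)\ee_a^{-1}/8$, which reduces all eight verifications to the single case $a=0$ by Clifford conjugation and so replaces your appeal to a $\ZZ_2^3$-symmetry with a one-line argument.
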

Indeed, it is straightforward to verify that for all $a,b\in\{0,\ldots,7\}$ and for all $A\in\OO$, the relations
\begin{align}
P_b \bullet_\lrcorner (P_a \bullet_\lrcorner A) &= \delta_{ab}P_a \bullet_\lrcorner A\label{bu2},\\
(A \bullet_\llcorner P_a) \bullet_\llcorner P_b &= \delta_{ab}\,A\bullet_\llcorner P_a\label{bu3}, 
\end{align} 
hold  where $\delta_{ab}=1$ if $a=b$, and 0 otherwise. The relations (\ref{bu2}) and (\ref{bu3}) are understood as  $\bu$-actions on the octonions or an $\odot$-action upon Clifford algebra elements $\cl_{0,7}$, as defined in~(\ref{209}), (\ref{210}), and (\ref{221}). In this direction, the $P_a$'s are $\bu$-idempotents satisfying $\sum^{7}_{a=0}P_a=1$. The idempotents 
$\left\{P_a\right\}^{7}_{a=0}$ form a complete set of \textit{orthogonal} $\bu$-idempotents decomposing the unity~\cite{dix}.

Given a fixed but arbitrary $a \in \left\{1,2,\ldots,7\right\}$, consider the idempotent $P_a$ in~\eqref{222}. Then, $P_a$ is a linear combination of $3$-vectors in $\bigwedge^3 \RR^{0,7}$ such that 
the element $\ee_{ijk}$ appears in the combination with the plus sign only when one of the indices $i,j,k$ equals $a.$
Furthermore, define 
\begin{equation}\label{alfa}
\alpha_a=2P_a-1 \in \bigwedge^0 \RR^{0,7}\op\bigwedge^3 \RR^{0,7}.
\end{equation}
\begin{prop} 
Let $\alpha_a$ be as in~\eqref{alfa} with $a \in \{1, \ldots, 7\}.$ Then, for all $X \in \mathbb{O}$,
\begin{equation}
\alpha_a\bu_\lrcorner(X\cv \ee_a)=\bar{X}\cv\ee_a.
\label{lemanovo}
\end{equation}
\end{prop}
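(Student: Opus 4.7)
The plan is to reduce the target~(\ref{lemanovo}) to a statement about how $P_a$ acts by $\bu_\lrcorner$, and then verify that statement on a basis. Writing $\alpha_a = 2P_a - 1$ and using the octonionic identity $X + \bar X = 2X^0$, the claim $\alpha_a \bu_\lrcorner (X\cv\ee_a) = \bar X \cv \ee_a$ is equivalent to
\[
P_a \bu_\lrcorner (X \cv \ee_a) = X^0\,\ee_a,
\]
which geometrically says that $P_a$ picks out the $\ee_a$-component of $X\cv\ee_a$, consistent with the direct expansion $X\cv\ee_a = X^0\ee_a - X^a + \sum_{b\neq a} X^b\,\epsilon^c_{ba}\,\ee_c$. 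Both sides are $\RR$-linear in $X$, so it suffices to verify this for $X \in \{1, \ee_1, \ldots, \ee_7\}$, and the three sub-cases $X = 1$, $X = \ee_a$ (using $\ee_a\cv\ee_a = -1$), and $X = \ee_b$ with $b \notin \{0, a\}$ (using $\ee_b \cv \ee_a = \pm\ee_c$) reduce respectively to
\[
P_a \bu_\lrcorner \ee_a = \ee_a, \qquad P_a \bu_\lrcorner 1 = 0, \qquad P_a \bu_\lrcorner \ee_c = 0 \quad (c \neq 0, a).
\]

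All three statements follow from the single family of fixed-point identities $P_c \bu_\lrcorner \ee_c = \ee_c$ for $c \in \{0, 1, \ldots, 7\}$ (with $\ee_0 = 1$), once combined with the orthogonality relation~(\ref{bu2}). Indeed, substituting $A = \ee_c$ in~(\ref{bu2}) yields $P_a \bu_\lrcorner (P_c \bu_\lrcorner \ee_c) = \delta_{ac}\,P_c \bu_\lrcorner \ee_c$, and substituting $P_c \bu_\lrcorner \ee_c = \ee_c$ on both sides gives
\[
P_a \bu_\lrcorner \ee_c = \delta_{ac}\,\ee_c,
\]
which simultaneously covers the case $c = 0$ (where $P_a \bu_\lrcorner 1 = 0$ since $a \neq 0$), the case $c \neq 0, a$, and recovers $P_a \bu_\lrcorner \ee_a = \ee_a$ when $c = a$.

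The remaining task is the direct verification of the fixed-point identities $P_c \bu_\lrcorner \ee_c = \ee_c$. Expanding $P_c$ from~(\ref{222}) as $\tfrac{1}{8}\bigl(1 + \sum_{k=1}^{7}\sigma^{(c)}_k\ee_{i_k j_k \ell_k}\bigr)$, I would evaluate each trivector term $\ee_{i_k j_k \ell_k} \bu_\lrcorner \ee_c = \ee_{i_k} \cv (\ee_{j_k} \cv (\ee_{\ell_k} \cv \ee_c))$ stepwise from $\ee_i \cv \ee_j = \epsilon^c_{ij}\,\ee_c - \delta_{ij}$. The signs $\sigma^{(c)}_k$ in~(\ref{222}) are tailored so that every trivector contribution equals $+\ee_c$ (one verifies that the rule ``positive sign when $c \in \{i_k,j_k,\ell_k\}$, negative sign otherwise'' matches exactly the sign $\ee_{i_k j_k \ell_k} \bu_\lrcorner \ee_c$ produces), and together with the scalar piece $1 \bu_\lrcorner \ee_c = \ee_c$ they add up to $8\ee_c$, which the factor $\tfrac{1}{8}$ normalizes back to $\ee_c$. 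The main obstacle is purely combinatorial bookkeeping: because $\cv$ is non-associative, each chain $\ee_i \cv (\ee_j \cv (\ee_k \cv \ee_c))$ must be evaluated in the strict right-to-left order imposed by~(\ref{210}), and one must track the signs relating the orderings appearing in~(\ref{222}) (e.g.\ $\ee_{621}$) with the canonical Fano cycles (e.g.\ $\ee_{126}$) appearing in $\psi$.
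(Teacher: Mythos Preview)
Your argument is correct and follows essentially the same route as the paper's proof: both reduce~\eqref{lemanovo} via $\alpha_a=2P_a-1$ to the projection identity $P_a\bu_\lrcorner(X\cv\ee_a)=X^0\ee_a$, equivalently $P_a\bu_\lrcorner\ee_c=\delta_{ac}\ee_c$. The only difference is that the paper invokes this projection property directly (it is recorded immediately after the proposition as a known relation from~\cite{dix}), whereas you supply an extra layer of justification by deriving it from the orthogonality~\eqref{bu2} together with the fixed-point identities $P_c\bu_\lrcorner\ee_c=\ee_c$, and then sketching how those fixed-point identities are verified term-by-term from~\eqref{222}.
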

\begin{proof} 
From the definition of $\alpha_a$ in~\eqref{alfa} and  $\bu_\lrcorner$ in~\eqref{213}, we have
\begin{align*}
\alpha_a\bu_\lrcorner(X\cv \ee_a)&=(2P_a-1)\bu_\lrcorner(X\cv\ee_a)\\
&=(2P_a-1)\bu_\lrcorner((X^0\ee_a + \sum_{i=1}^{7 }X^i\ee_i)\cv\ee_a)\\
&=2P_a\bu_\lrcorner(X^0\ee_a+ \sum_{i=i}^{7} X^i (\ee_i\cv\ee_a)) -X^0\ee_a - \sum_{i=1}^{7}  X^i (\ee_i\cv\ee_a)\\
&=2X^0\ee_a-X^0\ee_a- \sum_{i=1}^{7} X^i(\ee_i\cv\ee_a) =\bar{X}\cv\ee_a,
\end{align*}
for every $1\leq a \leq 7.$
\end{proof}
A simpler expression for the $P_a$ is  $\ee_a(1-\psi)\ee_a^{-1}/8$ (for $a = 0, 1, \ldots, 7$) where $\psi$ is given by Eq.~(\ref{201}) and its subsequent line. When $\OO$ is regarded as a left or right module over the Clifford algebra, the elements $P_a$ are mutually annihilating idempotents modulo the ideal $(1 + \ee_{1234567}) \cl_{0,7}\hko \cl_{0,7}.$ This assertion can be verified by a direct calculation using twenty one formulas like this one:
\begin{gather}
\ee_{476} \ee_{517}- \ee_{732} = \ee_{517} \ee_{476} - \ee_{732} = -(1 + \ee_{1234567}) \ee_{732}.
\end{gather}
Then, we also have these relations: 
\begin{gather}
P_a\bullet_\lrcorner \ee_a=\ee_a\bullet_\llcorner P_a=\ee_a \quad \mbox{but} \quad P_a\bullet_\lrcorner \ee_b=\ee_b\bullet_\llcorner P_a=0 \quad \mbox{if} \quad a\neq b.
\end{gather}

Recall that $\tilde{u}$ denotes the reversion of $u$ in $\bigwedge \mathbb{R}^{0,7}.$
\begin{prop} 
\begin{enumerate}
\item[(a)] Given $\ee_0 \in \mathbb{O}$ and $u=\ee_{i_1}\ee_{i_2}\cdots \ee_{i_k} \in \bigwedge^k \RR^{0,7}$, $k=1,2,\ldots,6$, then
\begin{gather}
\ee_0\bu_\llcorner u=\rho\ee_0\cv(1\bu_\llcorner \tilde{u})\nonumber
\end{gather}
where $\rho=(-1)^{|u|(|u|-1)/2}$ and $|u|$ denotes the degree of $u;$ if $u\in\bigwedge^k \RR^{0,7}$ then $|u|$ = $k.$
\item[(b)] Given $\ee_a \in \mathbb{O}$ and $u=\ee_{i_1}\ee_{i_2}\cdots \ee_{i_k} \in \bigwedge^k \RR^{0,7}$, $k=1,2,\ldots,6$, with $\ee_a\notin\{\ee_{i_1}, \ee_{i_2},\ldots, \ee_{i_k}\}$ or 
$\{\ee_a, \ee_l, \ee_m\}$ not being an $\HH$-triple for $\ee_l\neq \ee_m$ and $\ee_l, \ee_m\in \{\ee_{i_1},\ee_{i_2},\ldots,\ee_{i_6}\}$, then
\begin{gather}
\ee_a\bu_\llcorner u=\lambda\ee_a\cv(1\bu_\llcorner \tilde{u})
\end{gather}
where $\lambda=(-1)^{(|u|^2+|u|+2)/2}.$
\end{enumerate}
\end{prop}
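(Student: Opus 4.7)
The plan is to exploit the algebra anti-morphism $R\colon\cl_{0,7}\to\End(\OO)$ introduced in Section~3, where $R(u)(X)=X\bu_\llcorner u$, together with the elementary observation that a product of $k$ pairwise orthogonal vectors in the Clifford algebra satisfies $\tilde u=\rho\,u$, with $\rho=(-1)^{k(k-1)/2}$ counting the $\binom{k}{2}$ transpositions needed to reverse the order via the Clifford relations $\ee_{i_p}\ee_{i_q}=-\ee_{i_q}\ee_{i_p}$ for $p\neq q$.

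For part (a), the $\RR$-linearity of $R$ gives $R(\tilde u)=\rho\,R(u)$. Evaluating at $1\in\OO$ yields $1\bu_\llcorner\tilde u=\rho\,(1\bu_\llcorner u)$, equivalently $1\bu_\llcorner u=\rho\,(1\bu_\llcorner\tilde u)$ since $\rho=\pm 1$. Because $\ee_0=1$ is the multiplicative unit, the map $\ee_0\cv(-)$ is the identity on $\OO$, so the stated formula $\ee_0\bu_\llcorner u=\rho\,\ee_0\cv(1\bu_\llcorner\tilde u)$ follows at once.

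For part (b), I would start from the Clifford anticommutation $\ee_a u=(-1)^k\,u\,\ee_a$ in $\cl_{0,7}$, valid whenever $\ee_a$ is distinct from each $\uu_{i_j}$ (the first alternative of the hypothesis), since moving $\ee_a$ past $k$ pairwise orthogonal vectors yields $k$ sign changes. Applying the anti-morphism $R$ produces the operator identity $R(u)\,R_{\ee_a}=(-1)^k\,R_{\ee_a}\,R(u)$; evaluating at $1\in\OO$ and substituting part~(a) gives
\[
\ee_a\bu_\llcorner u=(-1)^k(1\bu_\llcorner u)\cv\ee_a=(-1)^k\rho\,(1\bu_\llcorner\tilde u)\cv\ee_a.
\]
Writing $Y:=1\bu_\llcorner\tilde u\in\OO$ and splitting $Y$ into its scalar part $Y^0$ and $\ee_a$-component $Y^a$, a direct computation with the octonion multiplication table gives $Y\cv\ee_a+\ee_a\cv Y=2Y^0\ee_a-2Y^a$, so the desired anticommutation $Y\cv\ee_a=-\ee_a\cv Y$ holds precisely when $Y$ is pure-imaginary and orthogonal to $\ee_a$. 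The no-$\HH$-triple hypothesis on $\{\ee_a,\ee_l,\ee_m\}$ for distinct $\ee_l,\ee_m\in\{\uu_{i_j}\}$ is designed to force $Y$ into this subspace. Substituting the anticommutation gives
\[
\ee_a\bu_\llcorner u=-(-1)^k\rho\,\ee_a\cv(1\bu_\llcorner\tilde u)=\lambda\,\ee_a\cv(1\bu_\llcorner\tilde u),
\]
with $\lambda=-(-1)^k\rho=(-1)^{1+k+k(k-1)/2}=(-1)^{(|u|^2+|u|+2)/2}$, matching the claimed sign.

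The main obstacle is the penultimate step: showing that the stated $\HH$-triple hypothesis really does force $Y=1\bu_\llcorner\tilde u$ to be simultaneously scalar-free and $\ee_a$-free. A scalar contribution to $Y$ would arise from an associative sub-triple among the $\uu_{i_j}$, while an $\ee_a$-contribution would propagate from an $\HH$-triple involving $\ee_a$ and two of the $\uu_{i_j}$. Ruling both out rigorously calls for a Fano-plane case analysis of how the iterated left-associated octonion product $(\uu_{i_k}\cv\uu_{i_{k-1}})\cv\cdots\cv\uu_{i_1}$ produces individual basis elements at each stage; this combinatorial bookkeeping is the most delicate portion of the argument and is where the full strength of the $\HH$-triple condition must be invoked.
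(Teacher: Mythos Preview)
Your argument for part~(a) is correct and considerably cleaner than the paper's. The paper verifies the identity by brute force, running through each degree $k=0,\ldots,7$ separately and repeatedly invoking $(\ee_0\cv X)\cv Y=\ee_0\cv(X\cv Y)$. Your observation that $\tilde u=\rho\,u$ already holds as an equality in $\cl_{0,7}$, so that linearity of $R$ gives $1\bu_\llcorner\tilde u=\rho\,(1\bu_\llcorner u)$ in one stroke, replaces all eight cases at once.

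For part~(b) your reduction $\ee_a\bu_\llcorner u=(-1)^k\rho\,(1\bu_\llcorner\tilde u)\cv\ee_a$ is correct and is essentially the alternative derivation the paper sketches \emph{after} its proof (the discussion surrounding $\vv\bu_\llcorner u=(1\bu_\llcorner(\vv u\vv^{-1}))\cv\vv$ and the sign factors $s,t$). The paper's main proof, by contrast, is again an exhaustive case split over $k$ and over the possible $\HH$-triple configurations among $\{\ee_a,\ee_b,\ee_c,\ldots\}$, tracking the sign of each reassociation $(\,\cdot\,\cv X)\cv Y\mapsto\,\cdot\,\cv(X\cv Y)$ individually.

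The gap you flag is real, but your heuristic for closing it is not right. You suggest that an $\ee_a$-component in $Y=1\bu_\llcorner\tilde u$ can only arise from an $\HH$-triple $\{\ee_a,\ee_l,\ee_m\}$ with $\ee_l,\ee_m$ among the original factors of $u$. This is false: the iterated product $((\ee_{i_1}\cv\ee_{i_2})\cv\ee_{i_3})\cv\cdots$ produces \emph{new} basis elements at each stage, and it is the $\HH$-triples involving those intermediate results that govern the outcome. For instance, with $u=\ee_1\ee_2\ee_3$ one has $1\bu_\llcorner u=(\ee_1\cv\ee_2)\cv\ee_3=\ee_6\cv\ee_3=\ee_5$; taking $\ee_a=\ee_5$ gives $Y=\pm\ee_5=\pm\ee_a$ even though no two of $\ee_1,\ee_2,\ee_3$ form an $\HH$-triple with $\ee_5$. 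The relevant associative triple here is $(5,6,3)$, which involves the intermediate product $\ee_6=\ee_1\cv\ee_2$ rather than two original factors. So the ``Fano-plane bookkeeping'' you anticipate must track these derived elements, not just the original $\uu_{i_j}$; this is precisely why the paper resorts to the explicit degree-by-degree case analysis rather than a uniform combinatorial criterion.
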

\begin{proof} 
\begin{enumerate}
\item[(a)]\begin{enumerate}
\item[0)] For $u=\ee_0\in\bigwedge^0 \RR^{0,7}$, then $\ee_0\bu_\llcorner u=\ee_0\bu_\llcorner\ee_0=\ee_0\cv(1\cv\ee_0)=\ee_0\cv(1\bu_\llcorner\tilde{u}).$
\item[1)] For $u=\ee_b\in\bigwedge^1 \RR^{0,7}$: 
\bege
\ee_0\bu_\llcorner u=\ee_0\bu_\llcorner\ee_b = \ee_0\cv (1 \cv \ee_b)= \ee_0\cv (1\bu_\llcorner \tilde{u}).
\nonumber
\enge
\item[2)] For $u = \ee_{bc}\in\bigwedge^2 \RR^{0,7}$: 
\bege
\ee_0\bu_\llcorner u=\ee_0\bu_\llcorner \ee_{bc} = (\ee_0\cv \ee_b)\cv \ee_c = \ee_0\cv (\ee_b\cv \ee_c)=\ee_0\cv (1\bu_\llcorner \ee_{bc})=-\ee_0\cv (1\bu_\llcorner \tilde{u}).
\nonumber
\enge
\item[3)] For $u = \ee_{bcd}\in\bigwedge^3 \RR^{0,7}$:
\begin{align}
\ee_0\bu_\llcorner u &= \ee_0\bu_\llcorner \ee_{bcd} = ((\ee_0\cv \ee_b)\cv \ee_c)\cv \ee_d = (\ee_0\cv (\ee_b\cv \ee_c))\cv \ee_d\n 
&=\ee_0\cv ((\ee_b\cv \ee_c)\cv \ee_d) = \ee_0\cv (1\bu_\llcorner\ee_{bcd})= -\ee_0\cv (1\bu_\llcorner \tilde{u}).
\nonumber
\end{align}
\item[4)] For $u = \ee_{bcdf}\in\bigwedge^4 \RR^{0,7}$:
\begin{align}
\ee_0\bu_\llcorner u &= \ee_0\bu_\llcorner \ee_{bcdf} = (((\ee_0\cv \ee_b)\cv \ee_c)\cv \ee_d)\cv \ee_f = ((\ee_0\cv (\ee_b\cv \ee_c))\cv \ee_d)\cv \ee_f \n 
&=(\ee_0\cv ((\ee_b\cv \ee_c)\cv \ee_d))\cv \ee_f = \ee_0\cv (((\ee_b\cv \ee_c)\cv \ee_d)\cv \ee_f)\n
&=\ee_0\cv (1\bu_\llcorner\ee_{bcdf})=\ee_0 \cv (1\bu_\llcorner \tilde{u}).
\nonumber
\end{align}
\item[5)] For $u = \ee_{bcdfg}\in\bigwedge^5 \RR^{0,7}$:
\begin{align}
\ee_0\bu_\llcorner u &= \ee_0\bu_\llcorner \ee_{bcdfg} = ((((\ee_0\cv \ee_b)\cv \ee_c)\cv \ee_d)\cv \ee_f)\cv \ee_g\n
&=(((\ee_0\cv (\ee_b\cv \ee_c))\cv \ee_d)\cv \ee_f)\cv \ee_g=((\ee_0\cv ((\ee_b\cv \ee_c)\cv \ee_d))\cv \ee_f)\cv \ee_g\n
&=(\ee_0\cv (((\ee_b\cv \ee_c)\cv \ee_d)\cv \ee_f))\cv \ee_g=\ee_0\cv ((((\ee_b\cv \ee_c)\cv \ee_d)\cv \ee_f)\cv \ee_g)\n
&=\ee_0\cv(1\bu_\llcorner\ee_{bcdfg})=\ee_0 \cv (1\bu_\llcorner \tilde{u}).
\nonumber
\end{align}
\item[6)] For $u = \ee_{bcdfgh}\in\bigwedge^6 \RR^{0,7}$:
\begin{align}
\ee_0\bu_\llcorner u &= \ee_0\bu_\llcorner \ee_{bcdfgh} = (((((\ee_0\cv \ee_b)\cv \ee_c)\cv \ee_d)\cv \ee_f)\cv \ee_g)\cv \ee_h \n
&=((((\ee_0\cv (\ee_b\cv \ee_c))\cv \ee_d)\cv \ee_f)\cv \ee_g)\cv \ee_h\n
&=(((\ee_0\cv ((\ee_b\cv \ee_c)\cv \ee_d))\cv \ee_f)\cv \ee_g)\cv \ee_h\n 
&=((\ee_0\cv (((\ee_b\cv \ee_c)\cv \ee_d)\cv \ee_f))\cv \ee_g)\cv \ee_h\n
&=(\ee_0\cv ((((\ee_b\cv \ee_c)\cv \ee_d)\cv \ee_f)\cv \ee_g))\cv \ee_h\n
&=\ee_0\cv (((((\ee_b\cv \ee_c)\cv \ee_d)\cv \ee_f)\cv \ee_g)\cv \ee_h)\n
&=\ee_0\cv(1\bu_\llcorner\ee_{bcdfgh})=-\ee_0 \cv (1\bu_\llcorner \tilde{u}).
\nonumber
\end{align}
\item[7)] For $u = \ee_{bcdfghj}\in\bigwedge^7 \RR^{0,7}$:
\begin{align}
\ee_0\bu_\llcorner u &= \ee_0\bu_\llcorner \ee_{bcdfghj} = ((((((\ee_0\cv \ee_b)\cv \ee_c)\cv \ee_d)\cv \ee_f)\cv \ee_g)\cv \ee_h)\cv \ee_j\n
&=(((((\ee_0\cv (\ee_b\cv \ee_c))\cv \ee_d)\cv \ee_f)\cv \ee_g)\cv \ee_h)\cv \ee_j\n
&=((((\ee_0\cv ((\ee_b\cv \ee_c)\cv \ee_d))\cv \ee_f)\cv \ee_g)\cv \ee_h)\cv \ee_j\n
&=(((\ee_0\cv (((\ee_b\cv \ee_c)\cv \ee_d)\cv \ee_f))\cv \ee_g)\cv \ee_h)\cv \ee_j\n
&=((\ee_0\cv ((((\ee_b\cv \ee_c)\cv \ee_d)\cv \ee_f)\cv \ee_g))\cv \ee_h)\cv \ee_j\n
&=(\ee_0\cv (((((\ee_b\cv \ee_c)\cv \ee_d)\cv \ee_f)\cv \ee_g)\cv \ee_h))\cv \ee_j\n
&=\ee_0\cv ((((((\ee_b\cv \ee_c)\cv \ee_d)\cv \ee_f)\cv \ee_g)\cv \ee_h)\cv \ee_j)\n
&=\ee_0\cv(1\bu_\llcorner\ee_{bcdfghj})=-\ee_0 \cv (1\bu_\llcorner \tilde{u}).
\nonumber
\end{align}
Therefore, $\ee_0\bu_\llcorner u=\rho\ee_0\cv(1\bu_\llcorner\tilde{u})$, where $\rho=(-1)^{|u|(|u|-1)/2}$ which matches the reversion sign.
\end{enumerate}
\item[(b)]\begin{enumerate}
\item[0)] When $u=\ee_0\in\bigwedge^0 \RR^{0,7}$, it follows that:
\bege
\ee_a\bu_\llcorner\ee_0=\ee_a\cv\ee_0=\ee_a\cv(1\cv\ee_0)=\ee_a\cv(1\bu_\llcorner \tilde{u}).
\nonumber
\enge
\item[1)] When $u=\ee_b\in\bigwedge^1 \RR^{0,7}$, it follows that:
\begin{enumerate}
\item[(i)] $a\neq b$: $\ee_a\bu_\llcorner u = \ee_a\bu_\llcorner\ee_b = \ee_a\cv (1 \cv \ee_b)= \ee_a\cv (1\bu_\llcorner \tilde{u})$,
\item[(ii)] $a=b$: $\ee_a\bu_\llcorner u = \ee_a\bu_\llcorner\ee_b = \ee_a\cv\ee_a= \ee_a\cv (1 \cv \ee_a)=\ee_a\cv (1\bu_\llcorner \tilde{u}).$
\end{enumerate}
\item[2)] When $u = \ee_{bc}\in\bigwedge^2 \RR^{0,7}$, it follows that:
\begin{enumerate}
\item[(i)] $a\notin \{b, c\}$, and ($abc$) is not an $\HH$-triple:  
\bege
\ee_a\bu_\llcorner u = \ee_a\bu_\llcorner \ee_{bc} = (\ee_a\cv \ee_b)\cv \ee_c = -\ee_a\cv (\ee_b\cv \ee_c)=-\ee_a\cv (1\bu_\llcorner \ee_{bc})=\ee_a\cv (1\bu_\llcorner \tilde{u}),
\nonumber
\enge
\item[(ii)] $a\in \{b, c\}$ or ($abc$) is an $\HH$-triple. Without a loss of generality, consider $a=b$: 
\begin{align}
\ee_a\bu_\llcorner u&=\ee_a\bu_\llcorner\ee_{bc}=\ee_a\bu_\llcorner \ee_{ac} = (\ee_a\cv \ee_a)\cv \ee_c = \ee_a\cv (\ee_a\cv \ee_c)\n
&=\ee_a\cv (1\bu_\llcorner \ee_{ac})=-\ee_a\cv (1\bu_\llcorner \tilde{u}).\nonumber
\end{align}
\end{enumerate}
\item[3)] When $u = \ee_{bcd}\in\bigwedge^3 \RR^{0,7}$, it follows that:
\begin{enumerate}
\item[(i)] $a\notin \{b, c, d\}$, and ($ijk$) is not an $\HH$-triple where $i,j,k$ \newline $\in \{a,b,c,d\}$: 
\begin{align}
\ee_a\bu_\llcorner u &= \ee_a\bu_\llcorner \ee_{bcd} = ((\ee_a\cv \ee_b)\cv \ee_c)\cv \ee_d =  -(\ee_a\cv (\ee_b\cv \ee_c))\cv \ee_d\n 
&=\ee_a\cv ((\ee_b\cv \ee_c)\cv \ee_d) = \ee_a\cv (1\bu_\llcorner\ee_{bcd})= -\ee_a\cv (1\bu_\llcorner \tilde{u}),
\nonumber
\end{align}
\item[(ii)] $a\in \{b, c, d\}$ or ($abc$) is an $\HH$-triple. Without a loss of generality, consider $a=b$: 
\begin{align}
\ee_a\bu_\llcorner u &= \ee_a\bu_\llcorner \ee_{bcd} = \ee_a\bu_\llcorner\ee_{acd}=((\ee_a\cv \ee_a)\cv \ee_c)\cv \ee_d = (\ee_a\cv (\ee_a\cv \ee_c))\cv \ee_d\n
&=-\ee_a\cv ((\ee_a\cv \ee_c)\cv \ee_d)=-\ee_a\cv (1\bu_\llcorner\ee_{acd})=\ee_a\cv (1\bu_\llcorner \tilde{u}),
\nonumber
\end{align}
\item[(iii)] ($ijk$) is an $\HH$-triple where $i,j,k\in \{a,b, c, d\}$ and ($ijk)\neq(abc$). Let us suppose that ($abd$) is an $\HH$-triple: 
\begin{align}
\ee_a\bu_\llcorner u &=\ee_a\bu_\llcorner \ee_{bcd} = ((\ee_a\cv \ee_b)\cv \ee_c)\cv \ee_d = -(\ee_a\cv (\ee_b\cv \ee_c))\cv \ee_d \n 
&=\ee_a\cv ((\ee_b\cv \ee_c)\cv \ee_d)=\ee_a(1\bu_\llcorner\ee_{bcd})=-\ee_a\cv (1\bu_\llcorner \tilde{u}).
\nonumber
\end{align}
\end{enumerate}
\item[4)]  When $u = \ee_{bcdf}\in\bigwedge^4 \RR^{0,7}$, it follows that:
\begin{enumerate}
\item[(i)] $a\notin \{b, c, d, f\}$, and ($ijk$) is not an $\HH$-triple where $i,j,k \in \{a,b,c,d,f\}$: 
\begin{align}
\ee_a\bu_\llcorner u &=\ee_a\bu_\llcorner \ee_{bcdf} = (((\ee_a\cv \ee_b)\cv \ee_c)\cv \ee_d)\cv \ee_f\n
&= -((\ee_a\cv (\ee_b\cv \ee_c))\cv \ee_d)\cv \ee_f=(\ee_a\cv ((\ee_b\cv \ee_c)\cv \ee_d))\cv \ee_f\n
&= -\ee_a\cv (((\ee_b\cv \ee_c)\cv \ee_d)\cv \ee_f)= -\ee_a\cv (1\bu_\llcorner\ee_{bcdf})=-\ee_a \cv (1\bu_\llcorner \tilde{u}),
\nonumber
\end{align}
\item[(ii)] $a\in \{b, c, d, f\}$ or ($abc$) is an $\HH$-triple. Without a loss of generality, consider $a=b$: 
\begin{align*}
\ee_a\bu_\llcorner u &= \ee_a\bu_\llcorner \ee_{bcdf}=\ee_a\bu_\llcorner\ee_{acdf} = (((\ee_a\cv \ee_a)\cv \ee_c)\cv \ee_d)\cv \ee_f\n
&=((\ee_a\cv (\ee_a\cv \ee_c))\cv \ee_d)\cv \ee_f=-(\ee_a\cv ((\ee_a\cv \ee_c)\cv \ee_d))\cv \ee_f\n
&=\ee_a\cv (((\ee_a\cv \ee_c)\cv \ee_d)\cv \ee_f)=\ee_a\cv(1\bu_\llcorner\ee_{acdf})=\ee_a \cv (1\bu_\llcorner \tilde{u}),
\end{align*}
\item[(iii)] ($ijk$) is an $\HH$-triple where $i,j,k\in \{a,b, c, d, f\}$ and ($ijk)\neq(abc$). Let us suppose that ($abd$) is an $\HH$-triple:
\begin{align}
\ee_a\bu_\llcorner u &= \ee_a\bu_\llcorner \ee_{bcdf} = (((\ee_a\cv \ee_b)\cv \ee_c)\cv \ee_d)\cv \ee_f\n
&= -((\ee_a\cv (\ee_b\cv \ee_c))\cv \ee_d)\cv \ee_f= (\ee_a\cv ((\ee_b\cv \ee_c)\cv \ee_d))\cv \ee_f\n
&= -\ee_a\cv (((\ee_b\cv \ee_c)\cv \ee_d)\cv \ee_f)=-\ee_a\cv(1\bu_\llcorner\ee_{bcdf})=-\ee_a \cv (1\bu_\llcorner \tilde{u}).
\nonumber
\end{align}
\end{enumerate}
\item[5)] When $u = \ee_{bcdfg}\in\bigwedge^5 \RR^{0,7}$, it follows that:
\begin{enumerate}
\item[(i)] $a\notin \{b, c, d, f, g\}$, and ($ijk$) is not an $\HH$-triple where $i,j,k \in \{a,b,c,d,f,g\}$:
\begin{align}
&\ee_a\bu_\llcorner u = \ee_a\bu_\llcorner \ee_{bcdfg} = ((((\ee_a\cv \ee_b)\cv \ee_c)\cv \ee_d)\cv \ee_f)\cv \ee_g\n
&=-(((\ee_a\cv (\ee_b\cv \ee_c))\cv \ee_d)\cv \ee_f)\cv \ee_g=((\ee_a\cv ((\ee_b\cv \ee_c)\cv \ee_d))\cv \ee_f)\cv \ee_g\n
&=-(\ee_a\cv (((\ee_b\cv \ee_c)\cv \ee_d)\cv \ee_f))\cv \ee_g=\ee_a\cv ((((\ee_b\cv \ee_c)\cv \ee_d)\cv \ee_f)\cv \ee_g)\n
&=\ee_a\cv(1\bu_\llcorner\ee_{bcdfg})=\ee_a \cv (1\bu_\llcorner \tilde{u}),
\nonumber
\end{align}
\item[(ii)] $a\in \{b, c, d, f, g\}$ or ($abc$) is an $\HH$-triple. Without a loss of generality, consider $a=b$: 
\begin{align}
&\ee_a\bu_\llcorner u = \ee_a\bu_\llcorner \ee_{bcdfgh}=\ee_a\bu_\llcorner\ee_{acdfgh} = ((((\ee_a\cv \ee_a)\cv \ee_c)\cv \ee_d)\cv \ee_f)\cv \ee_g\n
&=(((\ee_a\cv (\ee_a\cv \ee_c))\cv \ee_d)\cv \ee_f)\cv \ee_g=-((\ee_a\cv ((\ee_a\cv \ee_c)\cv \ee_d))\cv \ee_f)\cv \ee_g\n 
&=(\ee_a\cv (((\ee_a\cv \ee_c)\cv \ee_d)\cv \ee_f))\cv \ee_g=-\ee_a\cv ((((\ee_a\cv \ee_c)\cv \ee_d)\cv \ee_f)\cv \ee_g)\n
&=-\ee_a\cv(1\bu_\llcorner\ee_{acdfgh})=-\ee_a \cv (1\bu_\llcorner \tilde{u}),
\nonumber
\end{align}
\item[(iii)] ($ijk$) is an $\HH$-triple where $i,j,k\in \{a,b, c, d, f, g\}$ and ($ijk)\neq(abc$). Let us suppose that ($abd$) is an $\HH$-triple:
\begin{align}
&\ee_a\bu_\llcorner u= \ee_a\bu_\llcorner \ee_{bcdfg} = ((((\ee_a\cv \ee_b)\cv \ee_c)\cv \ee_d)\cv \ee_f)\cv \ee_g\n
&=-(((\ee_a\cv (\ee_b\cv \ee_c))\cv \ee_d)\cv \ee_f)\cv \ee_g=((\ee_a\cv ((\ee_b\cv \ee_c)\cv \ee_d))\cv \ee_f)\cv \ee_g\n
&=-(\ee_a\cv (((\ee_b\cv \ee_c)\cv \ee_d)\cv \ee_f))\cv \ee_g=\ee_a\cv ((((\ee_b\cv \ee_c)\cv \ee_d)\cv \ee_f)\cv \ee_g)\n
&=\ee_a\cv(1\bu_\llcorner\ee_{bcdfg})=\ee_a \cv (1\bu_\llcorner \tilde{u}),
\nonumber
\end{align}
\item[(iv)] ($ijk$) and $(lmn$) are $\HH$-triples where $i,j,k,l,m,n$ \newline $\in \{a,b,c,d,f,g\}$, $\{i,j,k\}\neq\{l,m,n\}$ and ($ijk)\neq(abc$). Let us suppose that ($abd$) and ($cfg$) are 
$\HH$-triples: 
\begin{align}
&\ee_a\bu_\llcorner u = \ee_a\bu_\llcorner \ee_{bcdfg}=((((\ee_a\cv \ee_b)\cv \ee_c)\cv \ee_d)\cv \ee_f)\cv \ee_g\n
&=-(((\ee_a\cv (\ee_b\cv \ee_c))\cv \ee_d)\cv \ee_f)\cv \ee_g=((\ee_a\cv ((\ee_b\cv \ee_c)\cv \ee_d))\cv \ee_f)\cv \ee_g\n
&=-(\ee_a\cv (((\ee_b\cv \ee_c)\cv \ee_d)\cv \ee_f))\cv \ee_g=\ee_a\cv ((((\ee_b\cv \ee_c)\cv \ee_d)\cv \ee_f)\cv \ee_g)\n
&=\ee_a\cv(1\bu_\llcorner\ee_{bcdfg})=\ee_a \cv (1\bu_\llcorner \tilde{u}).
\nonumber
\end{align}
\end{enumerate}
\item[6)] When $u = \ee_{bcdfgh}\in\bigwedge^6 \RR^{0,7}$, it follows that:
\begin{enumerate}
\item[(i)] $a\notin \{b, c, d, f, g, h\}$, and ($ijk$) and $(lmn$) are not $\HH$-triples where $i,j,k,l,m,n \in \{a,b,c,d,f,g,h\}$ and $\{i,j,k\}\neq\{l,m,n\}$:
\begin{align}
\ee_a\bu_\llcorner u &= \ee_a\bu_\llcorner \ee_{bcdfgh}=(((((\ee_a\cv \ee_b)\cv \ee_c)\cv \ee_d)\cv \ee_f)\cv \ee_g)\cv \ee_h \n
&=-((((\ee_a\cv (\ee_b\cv \ee_c))\cv \ee_d)\cv \ee_f)\cv \ee_g)\cv \ee_h\n
&=(((\ee_a\cv ((\ee_b\cv \ee_c)\cv \ee_d))\cv \ee_f)\cv \ee_g)\cv \ee_h\n 
&=-((\ee_a\cv (((\ee_b\cv \ee_c)\cv \ee_d)\cv \ee_f))\cv \ee_g)\cv \ee_h\n
&=(\ee_a\cv ((((\ee_b\cv \ee_c)\cv \ee_d)\cv \ee_f)\cv \ee_g))\cv \ee_h\n
&=-\ee_a\cv (((((\ee_b\cv \ee_c)\cv \ee_d)\cv \ee_f)\cv \ee_g)\cv \ee_h)\n
&=-\ee_a\cv(1\bu_\llcorner\ee_{bcdfgh})=\ee_a \cv (1\bu_\llcorner \tilde{u}),
\nonumber
\end{align}
\item[(ii)] $a\in \{b, c, d, f, g, h\}$ or ($abc$) is an $\HH$-triple. Without a loss of generality, consider $a=b$: 
\begin{align}
\ee_a\bu_\llcorner u &= \ee_a\bu_\llcorner \ee_{bcdfgh}=\ee_a\bu_\llcorner\ee_{acdfgh}\n
&= (((((\ee_a\cv \ee_a)\cv \ee_c)\cv \ee_d)\cv \ee_f)\cv \ee_g)\cv \ee_h \n
&=((((\ee_a\cv (\ee_a\cv \ee_c))\cv \ee_d)\cv \ee_f)\cv \ee_g)\cv \ee_h\n
&=-(((\ee_a\cv ((\ee_a\cv \ee_c)\cv \ee_d))\cv \ee_f)\cv \ee_g)\cv \ee_h\n 
&=((\ee_a\cv (((\ee_a\cv \ee_c)\cv \ee_d)\cv \ee_f))\cv \ee_g)\cv \ee_h\n
&=-(\ee_a\cv ((((\ee_a\cv \ee_c)\cv \ee_d)\cv \ee_f)\cv \ee_g))\cv \ee_h\n
&=\ee_a\cv (((((\ee_a\cv \ee_c)\cv \ee_d)\cv \ee_f)\cv \ee_g)\cv \ee_h)\n
&=\ee_a\cv(1\bu_\llcorner\ee_{acdfgh})=-\ee_a \cv (1\bu_\llcorner \tilde{u}),
\nonumber
\end{align}
\item[(iii)] ($ijk$) is an $\HH$-triple where $i,j,k\in \{a,b, c, d, f, g, h\}$ and ($ijk)\neq(abc$). Let us suppose that ($abd$) is an $\HH$-triple:
\begin{align}
\ee_a\bu_\llcorner u &= \ee_a\bu_\llcorner \ee_{bcdfgh} = (((((\ee_a\cv \ee_b)\cv \ee_c)\cv \ee_d)\cv \ee_f)\cv \ee_g)\cv \ee_h \n
&=-((((\ee_a\cv (\ee_b\cv \ee_c))\cv \ee_d)\cv \ee_f)\cv \ee_g)\cv \ee_h\n
&=(((\ee_a\cv ((\ee_b\cv \ee_c)\cv \ee_d))\cv \ee_f)\cv \ee_g)\cv \ee_h\n 
&=-((\ee_a\cv (((\ee_b\cv \ee_c)\cv \ee_d)\cv \ee_f))\cv \ee_g)\cv \ee_h\n
&=(\ee_a\cv ((((\ee_b\cv \ee_c)\cv \ee_d)\cv \ee_f)\cv \ee_g))\cv \ee_h\n
&=-\ee_a\cv (((((\ee_b\cv \ee_c)\cv \ee_d)\cv \ee_f)\cv \ee_g)\cv \ee_h)\n
&=-\ee_a\cv(1\bu_\llcorner\ee_{bcdfgh})=\ee_a \cv (1\bu_\llcorner \tilde{u}),
\nonumber
\end{align}
\item[(iv)] ($ijk$) and $(lmn$) are $\HH$-triples where $i,j,k,l,m,n$ \newline $\in \{a,b,c,d,f,g,h\}$, $\{i,j,k\}\neq\{l,m,n\}$ and ($ijk)\neq(abc$). Let us suppose that ($abd$) and ($cfg$) are 
$\HH$-triples:
\begin{align}
\ee_a\bu_\llcorner u &= \ee_a\bu_\llcorner \ee_{bcdfgh} = (((((\ee_a\cv \ee_b)\cv \ee_c)\cv \ee_d)\cv \ee_f)\cv \ee_g)\cv \ee_h \n
&=((((\ee_a\cv (\ee_b\cv \ee_c))\cv \ee_d)\cv \ee_f)\cv \ee_g)\cv \ee_h\n
&=-(((\ee_a\cv ((\ee_b\cv \ee_c)\cv \ee_d))\cv \ee_f)\cv \ee_g)\cv \ee_h\n 
&=((\ee_a\cv (((\ee_b\cv \ee_c)\cv \ee_d)\cv \ee_f))\cv \ee_g)\cv \ee_h\n
&=-(\ee_a\cv ((((\ee_b\cv \ee_c)\cv \ee_d)\cv \ee_f)\cv \ee_g))\cv \ee_h\n
&=\ee_a\cv (((((\ee_b\cv \ee_c)\cv \ee_d)\cv \ee_f)\cv \ee_g)\cv \ee_h)\n
&=\ee_a\cv(1\bu_\llcorner\ee_{bcdfgh})=-\ee_a \cv (1\bu_\llcorner \tilde{u}).
\nonumber
\end{align}
\end{enumerate}
\end{enumerate}
\end{enumerate}
\end{proof}
This complements Propositions 1, $1^\prime$, 2, $2^\prime$, 3, and 4 in \cite{qts7} towards the required generalization of~(\ref{3p14}). Indeed, using the aforementioned and demonstrated results we can show that
\begin{gather}
(\ee_a \bullet_\lrcorner u)\circ(\bar{u}\bullet_\llcorner \ee_b)=\beta (\ee_a \circ (\ee_b\bullet_\llcorner u))\circ(1\bullet_\llcorner \tilde{u})
\label{euue}
\end{gather} 
for $a\neq b$ while $\beta=\pm 1$ which depends on the degree $k$ of $u\in\bigwedge^k\mathbb{R}^{0,7}.$ When $u$ is an octonion, identity~(\ref{euue}) leads to~(\ref{3p14}). However, until now a general expression conjectured to be 
\begin{gather}
(A\bu_\llcorner u)\cv (\bar{u}\bu_\lrcorner B) = [\star_{u}^1({A})\cv ({\star_{u}^2}({B})\bu_\llcorner u)]\cv \overline{(1\bullet_\lrcorner \tilde{u})},
\end{gather}
where $\star_{u}^1$ and $\star_{u}^2$  are  $u$-induced involutions on $\mathbb{O}$ distinct from the $\mathbb{O}$-conju\-ga\-tion, lacks.

A result similar to the last Proposition can be obtained when $u\in\cl_{0,7}$ and $\vv\in\RR^{0,7}$. Then, the identities (\ref{205}) and (\ref{3p14}) imply
\begin{gather}
\vv\bu_\llcorner u = (1\bu_\llcorner (\vv u \vv^{-1}))\circ \vv. 
\label{2205}
\end{gather} 
If $\vv\in \{\ee_a\}_{a=1}^7$ and if $u$ is a product of $k$ pairwise orthogonal vectors in the basis $\{\ee_a\}_{a=1}^7$ of $\RR^{0,7}$, then $\vv^{-1} = -\vv$ and $\vv u\vv^{-1} = su$, for some 
$s=\pm 1$, and $\vv\circ (1\bu_\llcorner u)\circ \vv^{-1} =t(1\bu_\llcorner u)$ for some $t=\pm 1$, hence
\begin{gather}
\ee_a \bu_\llcorner u = st \ee_a \circ (1\bu_\llcorner u).
\label{2206}
\end{gather}
It remains to calculate the factor $st$ in (\ref{2206}). The factor $s$ is $(-1)^{k-1}$ or $(-1)^k$ provided $\ee_a$ is, or is not, a factor in the product $u.$  The factor $t$ is 1 when 
$1\bu_\llcorner u$ equals $\pm 1$ or $\pm \ee_a$, and it is $-1$ in all other cases. To know which into case $1\bu_\llcorner u$ falls, we can calculate it up to the sign  by means of an Abelian group consisting of the eight sets $\{\pm\ee_b\}_{b=1}^7.$

\section{Concluding remarks and outlook}
The parameter $A\circ_X B = (A\circ X)\circ(\bar{X}\circ B) = \bar{X}\circ((X\circ A)\circ B)$ is twice the parallelizing torsion whose components are given by 
\begin{gather}
T_{ijk}(X)=[(\bar{\ee}_i\circ\bar{X})\circ(X\circ {\ee_j})]\circ \ee_k.
\label{torsion}
\end{gather}
The right-hand-side of~(\ref{torsion}) is exactly the $\bar{X}$-product between $\bar{\ee}_i$ and $\ee_j$. So, the $S^7$ algebra can be written as $[\delta_i,\delta_j]=2T_{ijk}(X)\delta_k$ where 
$\delta_AX = X\cv A$, and the variation $\delta$ denotes the parallelizing covariant derivative \cite{dix}. By means of the $\odot$-product, all subsequent octonionic products are regarded as the 
$\odot$-product involving the Clifford multivector associated with the given octonionic product as defined in~(\ref{209}), (\ref{210}), and (\ref{221}). Thus, the arbitrary number of octonionic products can be encoded in a unique product -- the $\odot$-product.
 It is not quite a straightforward task to consider the reversed non-associative products. For instance, given 
$\alpha_0$ in~(\ref{alfa}), the identity 
\begin{multline}
(\ee_a\cv \ee_b)\cv X =(\alpha_0 \bu_\lrcorner(X\cv{\ee_b}))\cv \ee_a -\\(\alpha_0 \bu_\lrcorner(X\cv\bar{\ee}_a))\cv \bar{\ee}_b + (X\cv {\ee_b})\cv \ee_a
\end{multline}
holds for all $X\in\mathbb{O}.$ The possibility of performing non-associative products between arbitrary multivectors of $\cl_{0,7}$ naturally arises in our formalism \cite{eu1}, and it generalizes furthermore the formalism introduced in~\cite{dix} concerning the original $X$-product. Some additional application are shown in~\cite{hofff}. 

The formalism developed here is one more step towards new applications of the $S^7$ spinor fields. Heretofore, the product $\circ_u$ was introduced and now a few words delving into novel applications. It is well-known that the tangent space at $X$ is spanned by the units $\{X\circ \ee_i\}_{i\is 1}^7.$ As in \cite{ree}, by considering the tangent space basis at two infinitesimally separated points,  the parallel transport of this basis is defined by an infinitesimal transformation $\delta_A X=X\circ A$ where $A$ is a pure octonion with no scalar part. Given a field $\x$ on~$S^7$, the commutator of two such transformations can be calculated explicitly \cite{mart} as:
\begin{align*}
[\da,\db]\x &\equiv \da(\db \x)-\db(\da \x)=(\x\circ\a)\circ\b-(\x\circ\b)\circ\a\\
	    &=\x\circ\bigl(\bar{X}\circ((X\circ\a)\circ\b)-\bar{X}\circ((X\circ\b)\circ\a)\bigr)\\
            &=\d_{\bar{X}\circ((X\circ\a)\circ\b)-\bar{X}\circ((X\circ\b)\circ\a)}\circ\x
\end{align*}
The parameter 
\begin{gather}
\bar{X}\circ((X\circ\a)\circ\b)-\bar{X}\circ((X\circ\b)\circ\a)= 2(\bar{X}\circ((X\circ\a)\circ\b))
\end{gather} 
is twice the parallelizing torsion \cite{roo}. In component notation, 
\begin{gather}
T_{abc}(X)=[(\bar{\ee}_a\circ\bar{X})(X\circ \ee_b)\circ \ee_c] \quad  \mbox{and} \quad  [\d_a,\d_b]=2T_{abc}(X)\d_c.
\end{gather}
The variation $\d$ is indeed the parallelizing covariant derivative. We want to introduce a boson $\y$ such that ${\y\over |\y|}\is Y$ with some $S^7$ transformation rule. This excludes the simplest candidate $\da Y\is Y \circ \a$ \cite{mart}. The two fields are bound to transform differently. The correct transformation rule turns out to be $\delta_A \eta=\eta\circ_X A.$ Such transformation is  related to the transformation of the  parameter field $X.$ Therefore, fermions cannot transform without the presence of a parameter field. A field (bosonic or fermionic) transforming according to such map is  a  spinor under $S^7.$ Our formalism introduces a new transformation of such spinor fields since the product $\circ_u$ requires more parameters than the $X$-product. The current algebra associated to such spinor fields is, in addition, dramatically modified. We postpone  a deeper discussion of these consequences to a forthcoming paper since it is far beyond the scope of the present work. Even though a huge variety of new products can be introduced using our formalism, we are concerned to reveal and descry some applications. The products here introduced are immediate generalization of the results in, e.g., Cederwall, Bengtsson, Rooman, Preitschopf, Brink \cite{dix, mart, roo}, as well as other ones obtained by Toppan, G{\"u}naydin, Lukierski, Ketov, de Wit, Nicolai, Gursey, and others \cite{top, toplu, Guna2}. Finally, objects described here provide immediate generalizations of the instanton Hopf fibration and Lounesto spinor field classification \cite{hofff} as well as generalizations of Clifford algebras~\cite{esta} and the Lounesto spinor field classification in eight dimensions~\cite{where}.

\subsection*{Acknowledgment}
R. da Rocha is grateful to CNPq 476580/2010-2 and 304862/2009-6, and to  Funda\c c\~ao de Amparo \`a Pesquisa do Estado de S\~ao Paulo (FAPESP) 2011/08710-1, for financial support.

\end{document}